\documentclass[conference,letterpaper]{IEEEtran}

\usepackage{amsmath,amssymb,amsfonts,mathtools}
\usepackage{cite}
\usepackage{graphicx}
\usepackage{bm}
\usepackage{booktabs}
\usepackage{array}
\usepackage{algorithm}
\usepackage{algpseudocode}
\makeatletter
\newcommand\fs@shiftedruled{\def\@fs@cfont{\bfseries}\let\@fs@capt\floatc@ruled
  \def\@fs@pre{\kern6pt\hrule height.8pt depth0pt \kern2pt}%
  \def\@fs@post{\kern2pt\hrule\relax}%
  \def\@fs@mid{\kern2pt\hrule\kern2pt}%
  \let\@fs@iftopcapt\iftrue}
\makeatother
\floatstyle{shiftedruled}
\restylefloat{algorithm}

\newcommand{\R}{\mathbb{R}}
\newcommand{\Z}{\mathbb{Z}}
\newcommand{\T}{\mathbb{T}}
\newcommand{\E}{\mathbb{E}}
\newcommand{\Ann}{\operatorname{Ann}}
\newcommand{\BLER}{\mathrm{BLER}}
\newtheorem{proposition}{Proposition}
\newtheorem{corollary}{Corollary}

\title{Key-Reuse Vulnerability of Phase-Keyed Fourier-Curve Modulation: Relation Leakage and Key-Refresh Cost on Coded Links}

\author{
\IEEEauthorblockN{Bin~Han\IEEEauthorrefmark{1},
Muxia~Sun\IEEEauthorrefmark{2},
and Hans~D.~Schotten\IEEEauthorrefmark{1}\IEEEauthorrefmark{4}}
\IEEEauthorblockA{\IEEEauthorrefmark{1}RPTU University Kaiserslautern-Landau, Germany,\quad
\IEEEauthorrefmark{2}Beijing Huairou Laboratory, China\\
\IEEEauthorrefmark{4}German Research Center for Artificial Intelligence (DFKI), Germany}
}

\begin{document}
\maketitle

\begin{abstract}
Keyed modulation is often judged secure by its key-space size and a key-less receiver's error rate, evidence that fails when the key is reused and the waveform is harmonically coupled. For a phase-keyed Fourier-curve constellation, whose $k$ tones share one data parameter, integer relations among harmonic indices yield data-cancelling mixed moments that expose key characters. A modular relation lattice characterizes them; for consecutive harmonics, third-order moments recover the relative phases and a fourth-order moment completes the key up to cyclic relabeling whenever its coefficient is nonzero, as in all evaluated settings. A non-data-aided relation-moment estimator with code-aided resolution of the relabeling turns this leakage into an attack without key-space enumeration. With one key per 168-symbol codeword, the attack meets a predeclared $0.1$ compromise criterion at eleven of twelve operating points of a $(3,6)$ LDPC-coded link; at the middle noise level, the eavesdropper's block error rate is below $0.04$, and $0.03$ for consecutive harmonics without artificial noise when a key spans only $28$ symbols. Tangent artificial noise and a harmonic set without relations below order four raise her error rate at intermediate reuse lengths but do not remove the one-codeword vulnerability. On a $2^{128}$-key grid at the same noise level, the cheapest tested refresh schedule whose $95\%$ lower confidence bound on her block error rate reaches $0.9$ costs $3.54$ fresh key bits per information bit, and no tested schedule on either grid meets even a $0.5$ target with less key than a one-time pad.
\end{abstract}

\begin{IEEEkeywords}
Physical-layer security, keyed modulation, key reuse, higher-order statistics, Fourier-curve constellation, LDPC.
\end{IEEEkeywords}

\section{Introduction}\label{sec:intro}

Keyed modulation and physical-layer encryption let a shared secret select the rotation, mapping, or shape of the transmitted constellation, so that only a receiver holding the key demodulates correctly~\cite{zhang2017ple,mukherjee2014,sultan2018}. The security of such schemes is often supported by two observations: the key space is large, and a receiver that uses a wrong key suffers a high error rate. Both are insufficient security evidence when a key is reused and the waveform admits low-order statistics that cancel the data. As a simple instance, let three received tones carry the phases $h_r\theta+\varphi_r$, $h_s\theta+\varphi_s$ and $h_{r+s}\theta+\varphi_{r+s}$ of one data parameter $\theta$ with $h_r+h_s=h_{r+s}$. The data phase then cancels from the product $Z_rZ_s\overline Z_{r+s}$, whose conditional mean is a known real coefficient times $e^{\mathrm i(\varphi_r+\varphi_s-\varphi_{r+s})}$ for every transmitted symbol, so averaging the product over the symbols that share the key estimates this function of the key without any search. Fig.~\ref{fig:relation} shows this for $k=2$ tones with $h=(1,2)$.

Data-cancelling statistics are classical tools. Power-law carrier recovery removes the modulation of symmetric constellations~\cite{viterbi1983,moeneclaey1994}, cumulants of the received signal classify modulation formats~\cite{swami2000}, bispectral methods reconstruct phases~\cite{nikias1993,bendory2018}, and code-aided estimators exploit the channel code to acquire synchronization parameters at low SNR~\cite{herzet2007}. Keyed designs differ in whether such statistics apply. A common secret rotation of a symmetric constellation is recovered up to its symmetry by the power-law estimator, whereas a symbolwise remapping driven by a cryptographic generator~\cite{zhang2011chaos,sultan2018} inherits its security from that generator. Fresh keys can also be generated from reciprocal channels~\cite{maurer1993,zhang2016keygen}, at the net secret-key rate that remains after probing, reconciliation, and privacy amplification. These moment tools are thus standard; the contribution lies in identifying which key characters a harmonically coupled waveform exposes and in carrying this diagnosis through a finite-key coded link to its entropy cost.

We study this effect for phase-keyed Fourier-curve modulation~\cite{han2026letter}. A symbol occupies $k$ complex tones whose phases are integer multiples $h_j\theta$ of one data parameter $\theta$, offset by a secret phase vector $\bm\varphi$; tangent artificial noise (AN)~\cite{goel2008} can be added along the curve. The companion letter~\cite{han2026letter} analyzed detection when the receiver knows the key, and our preprint~\cite{han2026arxiv} developed the coded receiver. The present paper treats the complementary problem: how an eavesdropper who does not know the key exploits its reuse, and how much independent secret entropy is consumed by refresh schedules that frustrate the evaluated attack.

\begin{figure}[!t]
\centering
\vspace*{3pt}
\includegraphics[width=\linewidth]{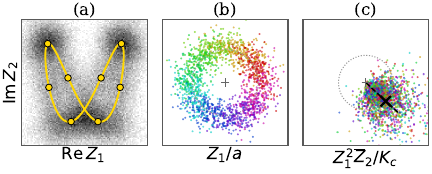}
\caption{Relation leakage at $k=2$, $h=(1,2)$, $M=8$, $\beta=0.3$, $\sigma_c=0.14$. (a) Received densities of the keyed curve with tangent AN, projected to $(\mathrm{Re}\,Z_1,\mathrm{Im}\,Z_2)$, in the geometry of~\cite{han2026arxiv}. (b) Tone~1, colored by the transmitted symbol, spreads over the circle with the data. (c) The monomial $Z_1^2\overline Z_2$ concentrates at the key phase $2\varphi_1-\varphi_2$ (cross) for every symbol.}
\label{fig:relation}
\end{figure}

The contributions are threefold. First, a modular relation lattice of the harmonic indices characterizes which key characters are exposed by data-cancelling mixed moments and which ambiguity remains; for consecutive harmonics, third-order relations recover the relative phases and a fourth-order relation completes the key up to the cyclic relabeling of the codebook whenever its coefficient is nonzero, and a comparison family with odd harmonics has no exposing relation below order four. Second, a non-data-aided relation-moment estimator, followed by a code-aided resolution of the residual relabeling, turns the leakage into an attack, which we evaluate by the eavesdropper's coded block error rate on an LDPC-coded link as a function of the number of symbols that share one key. Third, for an explicit finite product-grid protocol we relate periodic independent refresh to the secret entropy it consumes and compare it with a one-time pad on the information bits~\cite{shannon1949}. Tangent AN and the odd harmonic family enter as robustness checks of this chain.

The relation argument applies to waveforms whose observed components share one latent data parameter, and Table~\ref{tab:scope} delimits it from related keyed designs.

The remainder of the paper is organized as follows. Sec.~\ref{sec:model} states the signal, protocol, and threat model; Sec.~\ref{sec:leak} derives the relation leakage; Sec.~\ref{sec:attack} describes the attack; Sec.~\ref{sec:results} reports the coded experiments and the key-refresh cost; Sec.~\ref{sec:disc} discusses implications and limitations, and Sec.~\ref{sec:concl} concludes.

\section{System, Protocol, and Threat Model}\label{sec:model}

\subsection{Phase-keyed harmonic modulation}

Let $h=(h_1,\ldots,h_k)$ be distinct positive integers, $\theta_s=2\pi s/M$ for $s\in\Z_M$, and $\bm\varphi\in\T^k$ the phase key, $\T=\R/2\pi\Z$. With AN fraction $\beta\in[0,1)$, $v_h^2=\frac1k\sum_j h_j^2$, $a=\sqrt{(1-\beta)/k}$ and $d=\sqrt\beta/(\sqrt k\,v_h)$, the $j$-th received tone of symbol $S_\ell=s$ is
\begin{equation}
Z_{\ell,j}=e^{\mathrm i(\varphi_j+h_j\theta_s)}\bigl(a+\mathrm i h_j d A_\ell\bigr)+W_{\ell,j},
\label{eq:tone}
\end{equation}
where $A_\ell\sim\mathcal N(0,1)$ is the scalar tangent AN sample of the transmitted waveform, drawn independently for every symbol, and $W_{\ell,j}$ is proper complex Gaussian receiver noise with $\E|W_{\ell,j}|^2=2\sigma_c^2$, independent across tones and symbols and independent of the message, the key, and $A_\ell$. The consecutive family $h=(1,\ldots,k)$ is the Fourier curve of~\cite{han2026letter}; the odd family $h=(1,3,\ldots,2k-1)$ serves as a comparison. Bob knows the key and uses the covariance-aware max-log demapper of~\cite{han2026arxiv}, built on the matched metric of~\cite{han2026letter}; we use it only as the legitimate receiver and as the decoder Eve applies after estimating the key.

\subsection{Key reuse and finite key protocol}

One LDPC codeword occupies $n_s$ symbols, split into $n_s/L$ segments of $L$ symbols; each segment uses an independent key. The signal set is invariant under $\bm\varphi\mapsto\bm\varphi-2\pi q h/M$, $q\in\Z_M$, which only relabels the symbols. With $h_1=1$, the protocol draws every key in the fundamental domain $\mathcal F$ of keys with $\varphi_1\in[0,2\pi/M)$ and fixes the labeling publicly, so no secret label shift is added; continuous keys are drawn uniformly on $\mathcal F$. For the key-cost analysis the key is drawn uniformly from the product grid
\begin{equation}
\mathcal K_b=\Bigl\{\bigl(\tfrac{2\pi q_1}{M2^b},\tfrac{2\pi q_2}{2^b},\ldots,\tfrac{2\pi q_k}{2^b}\bigr):q_j\in\{0,\ldots,2^b-1\}\Bigr\}
\label{eq:grid}
\end{equation}
with entropy $H(K_b)=bk$ bits per refresh; $b=8$ at $k=16$ gives a product grid of $2^{128}$ protocol keys. This grid serves as the test protocol; its size counts protocol keys, and the attack of Sec.~\ref{sec:attack} needs no search over them.

The canonical labeling is part of the security model: since $h_1=1$ and $\varphi_1$ stays within one sector of width $2\pi/M$, the phase of tone~1 alone places the symbol in its sector. With a fresh key per symbol, tones $2,\ldots,k$ carry no information, so Eve's information per symbol is $I(S;Z_1)$, which by quadrature lies between $0.88$ and $1.85$ bits per symbol at the tested noise levels ($1.44$ at $\sigma_c=0.13$ without AN), below the $3.01$ bits per symbol of the code; it reaches the code rate at $11.7$ to $13.8$\,dB per complex slot. A full-circle first coordinate, a secret label shift of $\log_2M$ further key bits per refresh, closes this channel and leaves Eve to resolve the label among $M$ candidates with the code.

\subsection{Threat model and metric}

Eve knows the harmonic set, $M$, $\beta$, the noise level, the labeling, the LDPC code and decoder, Bob's demapper, the segment boundaries, and the attack; she does not know the segment keys, which stay constant within a segment, and receives no pilot, preamble, header, CRC, or known payload under a key. The AN sample $A_\ell$ is common to Bob and Eve because it is part of the transmitted waveform. The primary metric is Eve's coded block error rate $\BLER_E$; key-estimation error serves only as a diagnostic. A low $\BLER_E$ demonstrates compromise by the evaluated attack, and a high $\BLER_E$ measures resistance to this attack; a decoder that fails on a block can still disclose much of its content~\cite{duffy2023}.

\section{Relation Leakage}\label{sec:leak}

For $c\in\Z^k$ write $c=c^+-c^-$ with $c^\pm_j=\max(\pm c_j,0)$ and define the signed mixed monomial $U_{c,\ell}=\prod_j Z_{\ell,j}^{c_j^+}\,\overline{Z}_{\ell,j}^{\,c_j^-}$ and the modular relation lattice $L_M(h)=\{c\in\Z^k:\langle c,h\rangle\in M\Z\}$.

\begin{proposition}\label{prop:char}
For every $c\in\Z^k$,
\begin{equation}
\E\bigl[U_{c,\ell}\mid S_\ell=s,\bm\varphi\bigr]=K_c\,e^{\mathrm i\langle c,\bm\varphi\rangle}e^{\mathrm i\theta_s\langle c,h\rangle},
\label{eq:char}
\end{equation}
with the real coefficient $K_c=\E_A\bigl[\prod_j(a+\mathrm ih_jdA)^{c_j^+}(a-\mathrm ih_jdA)^{c_j^-}\bigr]$. If $c\in L_M(h)$, the conditional law of $U_{c,\ell}$ does not depend on $S_\ell$, so the data cancel. If in addition $K_c\neq0$, the mean reveals the key character $e^{\mathrm i\langle c,\bm\varphi\rangle}$.
\end{proposition}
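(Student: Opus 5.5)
The plan is to factor each tone into a deterministic phase times a term whose law does not depend on the symbol or the key, and then read off both the mean and the law of $U_{c,\ell}$ from that factorization.

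Fix $s$ and $\bm\varphi$ and write $\psi_j=\varphi_j+h_j\theta_s$. From \eqref{eq:tone}, $Z_{\ell,j}=e^{\mathrm i\psi_j}\bigl(a+\mathrm ih_jdA_\ell+W'_{\ell,j}\bigr)$ with $W'_{\ell,j}=e^{-\mathrm i\psi_j}W_{\ell,j}$. Because $W_{\ell,j}$ is proper (circularly symmetric) complex Gaussian and independent of $A_\ell$, the message and the key, the vector $(W'_{\ell,1},\ldots,W'_{\ell,k})$ has the same joint law as $(W_{\ell,1},\ldots,W_{\ell,k})$ and is independent of $A_\ell$, whatever $s$ and $\bm\varphi$ are. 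Conjugating gives $\overline Z_{\ell,j}=e^{-\mathrm i\psi_j}\bigl(a-\mathrm ih_jdA_\ell+\overline{W'_{\ell,j}}\bigr)$. Hence
\begin{equation*}
U_{c,\ell}=e^{\mathrm i\langle c,\bm\varphi\rangle}e^{\mathrm i\theta_s\langle c,h\rangle}\,F_c\bigl(A_\ell,W'_{\ell,1},\ldots,W'_{\ell,k}\bigr),
\end{equation*}
where $F_c$ is a fixed polynomial and its argument has a law independent of $(s,\bm\varphi)$.

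\textbf{Data cancellation.} If $c\in L_M(h)$, then $\theta_s\langle c,h\rangle\in2\pi\Z$, so the second phase factor equals $1$. The conditional law of $U_{c,\ell}$ given $S_\ell=s$ is then the law of $e^{\mathrm i\langle c,\bm\varphi\rangle}F_c$, which does not involve $s$. This settles the second claim, and the third follows immediately once the mean formula is in hand.

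\textbf{Computing the mean.} I condition on $A_\ell=A$ and use independence across tones. Since $c_j^+$ and $c_j^-$ are never both positive, each tone enters the product either as $(x_j+W'_j)^{n}$ or as $(\bar x_j+\overline{W'_j})^{n}$ with $n=|c_j|$, where $x_j=a+\mathrm ih_jdA$ and $\bar x_j=a-\mathrm ih_jdA$. The conditional expectation therefore factorizes over $j$. For a proper Gaussian, circular symmetry gives $\E[W^m]=\E[e^{\mathrm im\alpha}W^m]$ for every $\alpha$, so $\E[W^m]=0$ for all $m\ge1$. The binomial expansion then leaves only the noise-free term, so $\E[(x+W)^n]=x^n$, and likewise for the conjugate.

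This is the one point that needs care. Mixed moments $\E[W^p\overline W^{q}]$ with $q\neq0$ would not vanish, and they cannot appear precisely because no tone enters both $c^+$ and $c^-$. Averaging over $A$ then gives \eqref{eq:char} with $K_c$ as stated.

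\textbf{Reality of $K_c$.} $A$ is symmetric, and the substitution $A\mapsto-A$ sends $\prod_j x_j^{c_j^+}\bar x_j^{c_j^-}$ to its complex conjugate. Hence $K_c=\overline{K_c}$, so $K_c$ is real. Finiteness holds because all Gaussian moments are finite.
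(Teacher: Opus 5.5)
Your proposal is correct and follows essentially the same route as the paper's proof. Like the paper, you rotate each tone's noise by the deterministic phase $\varphi_j+h_j\theta_s$, use circular invariance so that the residual factor has a law independent of $(s,\bm\varphi)$, and remove every noise term via $c_j^+c_j^-=0$ together with vanishing pure noise moments and independence across tones. You then obtain reality of $K_c$ from the symmetry $A\mapsto-A$, exactly as the paper does.
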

\begin{IEEEproof}
Let $\widetilde W_{\ell,j}=e^{-\mathrm i(\varphi_j+h_j\theta_s)}W_{\ell,j}$. Proper circular noise is invariant under rotation, so $(\widetilde W_{\ell,1},\ldots,\widetilde W_{\ell,k})$ has the law of $(W_{\ell,1},\ldots,W_{\ell,k})$, and
\begin{multline}
U_{c,\ell}=e^{\mathrm i\langle c,\bm\varphi\rangle}e^{\mathrm i\theta_s\langle c,h\rangle}\\
\times\prod_j\bigl(a+\mathrm ih_jdA_\ell+\widetilde W_{\ell,j}\bigr)^{c_j^+}\bigl(a-\mathrm ih_jdA_\ell+\overline{\widetilde W}_{\ell,j}\bigr)^{c_j^-}.
\label{eq:rot}
\end{multline}
Since $c_j^+c_j^-=0$, each tone contributes only pure powers of its rotated noise or of the conjugate, whose expectations vanish, and independence across tones removes every term that contains noise. The average over $A$ is $K_c$, which is real because $A\mapsto-A$ conjugates the integrand. For $c\in L_M(h)$, $\theta_s\langle c,h\rangle\in2\pi\Z$ for all $s$, so by~\eqref{eq:rot} the whole law of $U_{c,\ell}$ is independent of $s$.
\end{IEEEproof}
Since $A_\ell$ and the receiver noise are independent across symbols, the monomials of one segment are independent and identically distributed given the key, even when the symbols form a codeword. Proposition~\ref{prop:char} characterizes the key characters exposed by signed monomial moments.

\begin{proposition}\label{prop:ann}
Let $R\subseteq L_M(h)$ be the integer span of the relations whose characters are observed. Two keys produce the same characters if and only if they differ by an element of $\Ann(R)=\{\bm\omega\in\T^k:e^{\mathrm i\langle c,\bm\omega\rangle}=1\ \forall c\in R\}$. With $g=\gcd(M,h_1,\ldots,h_k)$ and $R=L_M(h)$, $\Ann(R)$ is the cyclic relabeling group $\{-2\pi qh/M:q=0,\ldots,M/g-1\}$ of order $M/g$.
\end{proposition}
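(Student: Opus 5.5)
The plan has two parts: reduce the first claim to a statement about characters of the difference key, then compute $\Ann(L_M(h))$ by duality over $\Z_M$.

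\emph{Equivalence with the annihilator.} By Proposition~\ref{prop:char}, key $\bm\varphi$ produces the character $c\mapsto e^{\mathrm i\langle c,\bm\varphi\rangle}$, which is a homomorphism $\Z^k\to\T$. Two keys $\bm\varphi,\bm\varphi'$ therefore give the same character at $c$ iff $e^{\mathrm i\langle c,\bm\varphi-\bm\varphi'\rangle}=1$. Since $c\mapsto e^{\mathrm i\langle c,\bm\omega\rangle}$ is multiplicative, agreement on a generating set of relations is equivalent to agreement on their integer span $R$. Hence the keys agree on $R$ iff $\bm\omega=\bm\varphi-\bm\varphi'\in\Ann(R)$.

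\emph{Reduction to $\Z_M$.} Take $R=L_M(h)$. Because $M e_j\in L_M(h)$ for every unit vector $e_j$, any $\bm\omega\in\Ann(R)$ satisfies $M\omega_j\in2\pi\Z$. So $\bm\omega=2\pi m/M$ with $m\in\Z_M^k$. The condition becomes: the functional $f(c)=\langle c,m\rangle \bmod M$ on $\Z_M^k$ vanishes on $\ker\phi$, where $\phi(c)=\langle c,h\rangle \bmod M$.

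\emph{Key step: $f$ is a multiple of $\phi$.} I must show $m\equiv qh \pmod M$ for some $q$.
\begin{itemize}
\item The image of $\phi$ is $g\Z_M$. By Bezout there is $c_0$ with $\phi(c_0)=g$.
\item Put $y=f(c_0)$. Since $(M/g)c_0\in\ker\phi$, we get $(M/g)y\equiv0$, so $y\in g\Z_M$. Write $y=qg$.
\item Then $f-q\phi$ vanishes on $c_0$ and on $\ker\phi$. Every $c$ equals $u c_0$ plus an element of $\ker\phi$, where $\phi(c)=ug$, so these two pieces generate $\Z_M^k$.
\item Hence $f=q\phi$, i.e.\ $m\equiv qh$.
\end{itemize}
Conversely, $\bm\omega=2\pi qh/M$ gives $\langle c,\bm\omega\rangle=2\pi q\langle c,h\rangle/M\in2\pi\Z$ for every $c\in L_M(h)$. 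Replacing $q$ by $-q$ matches the sign convention of the stated group. This extension step is where I expect the only real care to be needed: $\Z_M$ is not a field, so the kernel argument must go through $c_0$ rather than through linear algebra.

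\emph{Counting.} The vectors $qh$ and $q'h$ coincide mod $M$ iff $M\mid(q-q')h_j$ for all $j$. This holds iff $M\mid(q-q')\gcd(h_1,\ldots,h_k)$, i.e.\ iff $(M/g)\mid(q-q')$. Therefore $q=0,\ldots,M/g-1$ give distinct elements, and the group is cyclic of order $M/g$.
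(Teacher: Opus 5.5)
Your proof is correct, and it takes a more elementary route than the paper's.

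\textbf{The paper's route.} The paper computes the index $[\Z^k:L_M(h)]=M/g$ from the image $g\Z_M$ of $c\mapsto\langle c,h\rangle\bmod M$. It then cites character duality to identify the annihilator of this kernel with the group generated by $-2\pi h/M$, with the index fixing the order of the annihilator.

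\textbf{Your route.} You prove the duality step by hand. The observation $Me_j\in L_M(h)$ confines $\Ann(L_M(h))$ to $(2\pi/M)\Z_M^k$. The Bezout vector $c_0$ with $\phi(c_0)=g$, combined with $(M/g)c_0\in\ker\phi$, then forces $f(c_0)\in g\Z_M$ and hence $f=q\phi$. This is precisely the extension step that a field-style kernel argument would skip. It also lets you avoid the index altogether, since the order $M/g$ comes from your direct count of the distinct residues $qh\bmod M$.

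\textbf{What each buys.} Your version is self-contained. For the first claim, it also spells out why agreement on the observed generators is equivalent to agreement on their span, which the paper leaves implicit. The paper's version is shorter and more general: the duality fact $|\Ann(R)|=[\Z^k:R]$ holds for every finite-index sublattice $R$. That fact underpins the remark after the proposition that a proper sublattice $R\subsetneq L_M(h)$ leaves an additional discrete ambiguity. Your factorization argument, by contrast, is specific to $R=L_M(h)$.
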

\begin{IEEEproof}
The first claim restates equality of all characters. The map $c\mapsto\langle c,h\rangle\bmod M$ has kernel $L_M(h)$ and image $g\Z_M$ of order $M/g$, so $[\Z^k:L_M(h)]=M/g$; by character duality the annihilator of the kernel is the group generated by $-2\pi h/M$.
\end{IEEEproof}
Both evaluated families have $h_1=1$, so $g=1$ and the ambiguity has order $M$. Equal real rank is not enough: $R=L_M(h)$ holds if and only if $\Ann(R)=\Ann(L_M(h))$, and a proper sublattice of finite index leaves an additional discrete ambiguity, so generation of $L_M(h)$ must be checked over the integers.

\begin{corollary}[consecutive harmonics]\label{cor:cons}
For $h=(1,\ldots,k)$ the relations $e_1+e_{m-1}-e_m$, $m=2,\ldots,k$, generate $\{c:\langle c,h\rangle=0\}$, and their third-order moments $\E[Z_rZ_s\overline Z_{r+s}]=a[a^2+d^2(r^2+rs+s^2)]e^{\mathrm i(\varphi_r+\varphi_s-\varphi_{r+s})}$ have positive coefficients. At $(k,M)=(16,64)$ the relation $4e_{16}$ completes $L_{64}(h)$, with $K_{4e_{16}}=a^4-6a^2(16d)^2+3(16d)^4$. If $K_{4e_{16}}\neq0$, the key is identifiable from these relations up to the cyclic relabeling.
\end{corollary}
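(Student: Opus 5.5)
The proof has four parts: the lattice generation, the two moment coefficients from Proposition~\ref{prop:char}, the completion of $L_{64}(h)$ by $4e_{16}$, and identifiability via Proposition~\ref{prop:ann}.

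\emph{Generation of the exact kernel.} Let $N=\{c\in\Z^k:\langle c,h\rangle=0\}$ with $h=(1,\ldots,k)$. For $c\in N$ we eliminate coordinates from the top down. Subtracting $c_k(e_1+e_{k-1}-e_k)$ clears coordinate $k$ while keeping the result in $N$. Repeating for $m=k-1,\ldots,2$ leaves a vector supported on $e_1$. Since it lies in $N$ and $h_1=1$, that vector is $0$. Each step uses integer coefficients, so the $k-1$ relations generate $N$ over $\Z$, not merely over $\R$; this is the point stressed after Proposition~\ref{prop:ann}.

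\emph{Coefficients.} By Proposition~\ref{prop:char}, each coefficient is an expectation over $A$ alone. For $c=e_r+e_s-e_{r+s}$ (with $r=s$ meaning $c=2e_r-e_{2r}$) we expand
\[
(a+\mathrm i r dA)(a+\mathrm i s dA)(a-\mathrm i (r+s) dA).
\]
Odd moments of $A$ vanish and $\E A^2=1$, so the constant and $A^2$ terms give
\[
a^3+a d^2\bigl[(r+s)^2-rs\bigr]=a\bigl[a^2+d^2(r^2+rs+s^2)\bigr],
\]
which is positive because $a>0$. For $4e_{16}$ we expand $(a+16\mathrm i dA)^4$ and use $\E A^2=1$, $\E A^4=3$. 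This gives
\[
a^4-6a^2(16d)^2+3(16d)^4 .
\]

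\emph{Completion at $(k,M)=(16,64)$.} The map $c\mapsto\langle c,h\rangle$ sends $\Z^{16}$ onto $\Z$ and has kernel $N$. The lattice $L_{64}(h)$ is the preimage of $64\Z$. Hence $L_{64}(h)$ is generated by $N$ together with any single vector whose image is $64$. Since $\langle 4e_{16},h\rangle=64$, the relation $4e_{16}$ suffices, and $R=N+\Z\,4e_{16}=L_{64}(h)$.

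\emph{Identifiability.} Each generator's character is observed, because its coefficient is nonzero: this is automatic for the third-order relations and assumed for $K_{4e_{16}}$. The characters of integer combinations are then products of observed characters, so all of $L_{64}(h)$ is observed. Proposition~\ref{prop:ann} with $g=1$ then gives identifiability up to the cyclic relabeling group of order $64$.

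The main obstacle is integer generation of $N$, which the elimination argument handles. The rest is bookkeeping of Gaussian moments.
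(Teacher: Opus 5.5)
Your proposal is correct and follows the paper's route. Your top-down elimination is the paper's observation that $e_m\equiv me_1$ modulo the chain relations, and the completion by $4e_{16}$ is the same coset argument. The coefficients and the identifiability follow directly from Propositions~\ref{prop:char} and~\ref{prop:ann}.

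One sign slip needs fixing. The relation $e_1+e_{k-1}-e_k$ has entry $-1$ in coordinate $k$, so you must add $c_k(e_1+e_{k-1}-e_k)$, not subtract it, to clear that coordinate (subtracting doubles it). The same correction applies at each later step.
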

Modulo the chain relations, $e_m\equiv me_1$, which proves generation; $\langle 4e_{16},h\rangle=64$ adds the missing coset. The coefficient of $4e_{16}$ vanishes at isolated values $\beta\approx0.063$ and $\beta\approx0.399$, where the third-order relations still recover the relative phases but completing the common phase would need another modular relation or a higher-order moment; it is nonzero at the evaluated $\beta\in\{0,0.3\}$. At $\beta=0$ all coefficients reduce to powers of $a>0$: the leakage is a property of the coupled harmonic means and does not require AN.

\begin{corollary}[odd harmonics]\label{cor:odd}
For $h=(1,3,\ldots,31)$ at $M=64$, no nonzero $c\in L_{64}(h)$ has $\|c\|_1\le3$, and fourth-order relations generate $L_{64}(h)$.
\end{corollary}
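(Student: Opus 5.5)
The plan is to treat the two claims separately: a parity and size argument for the absence of low-order relations, and an explicit generating set compared against the index from Proposition~\ref{prop:ann} for generation. For the first claim, take a nonzero $c\in\Z^k$ with $\|c\|_1\le3$. Then $\langle c,h\rangle$ is a signed sum of exactly $\|c\|_1$ entries of $h$, counted with multiplicity. Every entry $h_j=2j-1$ is odd, so for $\|c\|_1\in\{1,3\}$ the sum is odd and cannot lie in $64\Z$. For $\|c\|_1=2$ there are two cases:
\begin{itemize}
\item $c=\pm(e_i+e_j)$, with $i=j$ allowed: then $0<|\langle c,h\rangle|\le 62<64$.
\item $c=\pm(e_i-e_j)$ with $i\neq j$: then $\langle c,h\rangle=\pm(h_i-h_j)$, which is nonzero because the harmonics are distinct, and of modulus at most $30$.
\end{itemize}
In neither case is $\langle c,h\rangle\in64\Z$, so no nonzero $c\in L_{64}(h)$ has $\|c\|_1\le3$.

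For generation, I follow the route used for Corollary~\ref{cor:cons}. I exhibit a set $R$ of fourth-order relations ($\|c\|_1=4$) lying in $L_{64}(h)$ such that, modulo $R$, every $e_m$ is congruent to $h_me_1$, and $64e_1\in R$. The candidates are:
\begin{itemize}
\item $3e_1-e_2$, since $3\cdot1-3=0$;
\item $e_1+e_m-e_2-e_{m-1}$ for $m=3,\ldots,16$, since $1+(2m-1)=3+(2m-3)$;
\item $2e_1+2e_{16}$, since $2+62=64$.
\end{itemize}
Each has $\ell_1$-norm $4$ and inner product with $h$ in $64\Z$, so $R\subseteq L_{64}(h)$.

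By induction on $m$: the first relation gives $e_2\equiv3e_1$. The chain relations then give
\[
e_m\equiv e_2+e_{m-1}-e_1\equiv\bigl(3+(2m-3)-1\bigr)e_1=(2m-1)e_1 .
\]
Hence $\Z^k/\langle R\rangle$ is cyclic and generated by the class of $e_1$. The last relation gives $2e_1+2e_{16}\equiv(2+62)e_1=64e_1\equiv0$, so the quotient has order at most $64$.

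Proposition~\ref{prop:ann}, with $g=\gcd(64,1,3,\ldots)=1$, gives $[\Z^k:L_{64}(h)]=64$. Since $\langle R\rangle\subseteq L_{64}(h)$ and $[\Z^k:\langle R\rangle]\le64=[\Z^k:L_{64}(h)]$, the two lattices coincide. Thus fourth-order relations generate $L_{64}(h)$ over the integers, which is the distinction stressed after Proposition~\ref{prop:ann}.

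The main obstacle is not the lattice argument but making sure that integer generation, rather than mere real rank, is obtained. This is why I include the odd-coset relation $2e_1+2e_{16}$ explicitly: relations with $\langle c,h\rangle=0$ alone would leave a finite-index sublattice and an extra discrete ambiguity. If one also wants each used character to be observable, one would check $K_c\neq0$ for these relations via Proposition~\ref{prop:char}. The statement, however, concerns only the lattice, so I would remark on this separately.
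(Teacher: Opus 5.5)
Your proof is correct and follows the paper's argument: parity and size bounds rule out $\|c\|_1\le3$, and fourth-order relations reducing each $e_m$ to $h_me_1$, together with $2e_1+2e_{16}$, give a cyclic quotient of order at most $64$ that must equal $[\Z^k:L_{64}(h)]=64$. Your chain relations $e_1+e_m-e_2-e_{m-1}$ are telescoped sums of the paper's second differences $e_j-2e_{j+1}+e_{j+2}$, and your case analysis is slightly more explicit, covering $\|c\|_1=1$ and the nonvanishing of two-term sums. One slip occurs only in your closing side remark: zero-sum relations alone span a rank-$(k-1)$ lattice of infinite index in $L_{64}(h)$, so they leave a continuous common-phase ambiguity along $h$, not a finite-index sublattice with a discrete ambiguity.
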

Signed sums of two harmonics lie strictly between $-64$ and $64$, and a signed sum of three odd numbers is odd. The relations $3e_1-e_2$ and $e_j-2e_{j+1}+e_{j+2}$ give $e_j\equiv h_je_1$, and the modular relation $w=2e_1+2e_{16}$ gives $64e_1\equiv0$, so the quotient is cyclic of order $64$ and the lattice is complete. Its coefficient $K_w=a^4-1086a^2d^2+2883d^4$ is negative at $\beta=0.3$, which the known sign correction absorbs. This odd set is also biorthogonal at $(k,M)=(16,64)$, so it differs from the consecutive family in geometry, relation multiplicity, coefficient magnitudes, and dependence among moments as well as in minimum relation order.

\begin{table}[!t]
\centering
\caption{Scope of the relation argument.}
\label{tab:scope}
\footnotesize
\begin{tabular}{@{}>{\raggedright\arraybackslash}p{0.34\linewidth}>{\raggedright\arraybackslash}p{0.56\linewidth}@{}}
\toprule
Keyed design & Exposing statistic and status\\
\midrule
Tones sharing one data parameter, phases $h_j\theta_s+\varphi_j$ & Relation moments~\eqref{eq:char}; key identified up to relabeling when $R=L_M(h)$ (this paper)\\
One common constellation rotation & Power-law carrier recovery~\cite{viterbi1983}; rotation identified up to the constellation symmetry\\
Independent symbols per subcarrier, per-subcarrier key & No pointwise cross-tone cancellation; other statistics, pilots, or code-aided attacks not analyzed here\\
Symbolwise mapping from a cryptographic generator & Security inherited from the generator, independent of the modulation geometry\\
\bottomrule
\end{tabular}
\end{table}

\section{Relation-Moment Attack}\label{sec:attack}

Eve uses a fixed relation set $\mathcal S=\mathcal S_0\cup\mathcal S_1$ of exact-zero relations $\mathcal S_0$ and modular relations $\mathcal S_1$ with $\langle c,h\rangle=M$. For the consecutive family, $\mathcal S_0=\{e_r+e_s-e_{r+s}:1\le r\le s,\ r+s\le16\}$ contains $64$ third-order relations and $\mathcal S_1=\{4e_{16}\}$ is the fourth-order completion. For the odd family, $\mathcal S_0$ contains the $455$ relations with $\|c\|_1=4$ and $\langle c,h\rangle=0$, one of each pair $\pm c$, and $\mathcal S_1$ the $297$ relations with $\|c\|_1=4$ and $\langle c,h\rangle=64$. At the evaluated $\beta\in\{0,0.3\}$ every $K_c$ in $\mathcal S$ is nonzero, and a Smith normal form computation confirms that $\mathcal S$ generates $L_{64}(h)$, of index $64$ in $\Z^{16}$, for both families, so by Proposition~\ref{prop:ann} the only residual ambiguity is the cyclic relabeling. Because $K_{4e_{16}}$ and $K_w$ vanish at isolated $\beta$, this nondegeneracy holds on the tested grid and not at every $\beta$.

\begin{algorithm}[!t]
\caption{Relation-moment attack on a segment $\mathcal I$, $|\mathcal I|=L$}
\label{alg:attack}
\begin{algorithmic}[1]
\small
\State $\widehat M_c\gets L^{-1}\sum_{\ell\in\mathcal I}U_{c,\ell}$, $\widehat\chi_c\gets\mathrm{sign}(K_c)\,\widehat M_c/|\widehat M_c|$, $c\in\mathcal S$
\State $w_c\gets|K_c|/\prod_jP_j^{|c_j|/2}$ with $P_j=a^2+h_j^2d^2+2\sigma_c^2$
\State initialize $\bm\psi$, $\psi_1=0$, by forward substitution over the generators of $\mathcal S_0$
\State refine $\bm\psi$ by Gauss--Newton on $\sum_{c\in\mathcal S_0}w_c^2r_c^2$, $r_c=\mathrm{Arg}(\widehat\chi_ce^{-\mathrm i\langle c,\bm\psi\rangle})$; stop after $40$ steps or when no coordinate moves by more than $10^{-12}$
\State $\widehat\varphi_1\gets M^{-1}\mathrm{Arg}\sum_{c\in\mathcal S_1}w_c\,\widehat\chi_ce^{-\mathrm i\langle c,\bm\psi\rangle}$
\State $\widehat\varphi_j\gets\psi_j+h_j\widehat\varphi_1\bmod2\pi$; map $\widehat{\bm\varphi}$ into $\mathcal F$
\State list $\widehat{\bm\varphi}-2\pi qh/M$ for $q\in\{-1,0,+1\}$; on $\mathcal K_b$, round each coordinatewise and add the other neighbour of the $3$ coordinates closest to a rounding boundary
\State compute the segment's matched symbol metrics under every candidate
\end{algorithmic}
\end{algorithm}

Algorithm~\ref{alg:attack} lists the attack, with $\mathrm{Arg}\in(-\pi,\pi]$. The weights depend only on public parameters, and no empirical moment is discarded for small magnitude. Since $\langle c,\bm\varphi\rangle=\langle c,\bm\psi\rangle+M\varphi_1$ for $c\in\mathcal S_1$, each modular relation estimates $e^{\mathrm iM\varphi_1}$, and step~5 averages them on the circle. The canonical representative is discontinuous at the edge of $\mathcal F$: a small error in $\varphi_1$ near that edge moves the estimate to the neighbouring cyclic representative and shifts every label of the segment by one, which the candidates $q=\pm1$ repair. On the grid the first coordinate is clipped to its protocol range, and at most $24$ candidates remain after duplicates are removed. Eve resolves the candidates of all segments of a codeword jointly with the code, in the spirit of code-aided synchronization~\cite{herzet2007}: starting from equal weights, she demaps each segment with the weighted mixture of its candidates' matched symbol likelihoods and decodes; if the decoder ends without a valid codeword, she reweights each segment's candidates by their likelihood under the decoder's extrinsic bit LLRs and decodes again, at most six times per codeword. A segment-local reference rule without decoding keeps the candidate with the fewest unsatisfied parity checks when its hard decisions fill the segment and all other coded bits are zero. The attack never enumerates the key space; per segment it costs the moment accumulation, one small fit, and one demapping per candidate, plus the decodes.

\section{Coded Experiments}\label{sec:results}

\subsection{Setup}

We take $(k,M)=(16,64)$, a regular $(3,6)$ LDPC code of length $1008$ and rate $506/1008$ with $50$ scaled min-sum iterations (factor $0.8$), and Gray labeling without a bit interleaver, as in~\cite{han2026arxiv}, so a codeword spans $n_s=168$ symbols and carries $R_c\log_2M\approx3.01$ information bits per symbol. We evaluate both harmonic families, $\beta\in\{0,0.3\}$, noise levels $\sigma_c\in\{0.10,0.13,0.16\}$, that is $4.95$, $2.67$ and $0.87$\,dB per complex slot, and the reuse lengths $L\in\{4,7,8,12,14,21,24,28,42,56,84,168\}$, all divisors of $168$. Every cell uses $N=500$ independent instances, each drawing a fresh message, fresh segment keys, AN, and noise. We simulate one receive realization, which Bob and Eve both process. Since only Eve's marginal error rate is reported, it equals that of an Eve with independent receiver noise at the same SNR. Intervals for one cell are exact Clopper--Pearson intervals~\cite{clopper1934}, and differences between independent cells use Newcombe's hybrid score interval~\cite{newcombe1998}. Bob decoded every codeword without error in every cell, so the one-sided $95\%$ upper bound of his block error rate is $1-0.05^{1/500}\approx6.0\cdot10^{-3}$ per cell. The simulation code and the enumerated relation sets will be released upon publication.

\subsection{Reuse dependence}

\begin{figure}[!t]
\centering
\includegraphics[width=.86\linewidth]{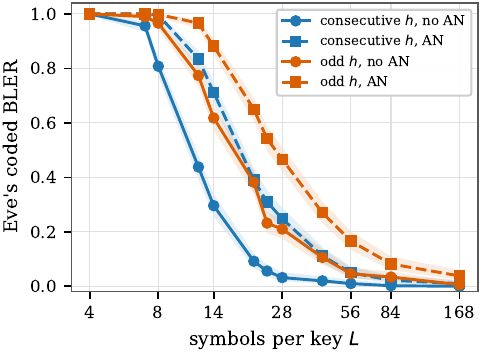}
\caption{Eve's coded $\BLER_E$ versus the number of symbols per key at $\sigma_c=0.13$, with $95\%$ Clopper--Pearson bands; $N=500$ instances per point.}
\label{fig:bler}
\end{figure}

Fig.~\ref{fig:bler} shows the central result. With one key per codeword, $L=168$, Eve's block error rate stays below $0.04$ at $\sigma_c=0.13$, and the attack meets the predeclared criterion, an upper end of the two-sided $95\%$ Clopper--Pearson interval of $\BLER_E$ at most $0.1$, in eleven of the twelve cells across all noise levels. The exception is the odd family with AN at $\sigma_c=0.16$, where $\BLER_E=0.096$ with upper end $0.125$. Shorter reuse raises Eve's error rate, but she stays successful well below one key per codeword: at $\sigma_c=0.13$ her error rate falls below $0.9$ between $L=8$ and $14$, crosses $0.5$ between $L=12$ and $28$, and drops below $0.05$ at $L=28$ for consecutive harmonics without AN and at $L=56$ to $168$ otherwise. Table~\ref{tab:noise} shows the same behavior at all three noise levels. On the same key estimates, the segment-local rule picks a wrong cyclic candidate in $36\%$ to $63\%$ of the segments at $\sigma_c=0.13$ and $L\in\{14,28,56\}$, and its error rate falls to $0.5$ only at $2.3$ to $6.3$ times the reuse length of the code-aided rule, so a weak label resolution overstates the resistance.

\begin{table}[!t]
\centering
\caption{Eve's coded $\BLER_E$ at three noise levels ($N=500$).}
\label{tab:noise}
\footnotesize
\setlength{\tabcolsep}{4pt}
\begin{tabular}{@{}llcccc@{}}
\toprule
$\sigma_c$ & $L$ & consec. & consec.+AN & odd & odd+AN\\
\midrule
$0.10$ & $14$ & $0.012$ & $0.206$ & $0.156$ & $0.482$\\
 & $28$ & $0.000$ & $0.030$ & $0.014$ & $0.098$\\
 & $168$ & $0.000$ & $0.006$ & $0.000$ & $0.006$\\
\midrule
$0.13$ & $14$ & $0.296$ & $0.712$ & $0.618$ & $0.882$\\
 & $28$ & $0.032$ & $0.248$ & $0.210$ & $0.466$\\
 & $168$ & $0.000$ & $0.010$ & $0.006$ & $0.038$\\
\midrule
$0.16$ & $14$ & $0.742$ & $0.958$ & $0.934$ & $0.992$\\
 & $28$ & $0.206$ & $0.598$ & $0.518$ & $0.794$\\
 & $168$ & $0.014$ & $0.026$ & $0.034$ & $0.096$\\
\bottomrule
\end{tabular}
\end{table}

\subsection{Artificial noise and relation order}

\begin{table}[!t]
\centering
\caption{Change of $\BLER_E$ at $\sigma_c=0.13$ with $95\%$ Newcombe intervals.}
\label{tab:effects}
\footnotesize
\setlength{\tabcolsep}{3pt}
\begin{tabular}{@{}lccc@{}}
\toprule
Comparison & $L=14$ & $L=28$ & $L=56$\\
\midrule
AN on$-$off, consec. & $.42\,[.36,.47]$ & $.22\,[.18,.26]$ & $.04\,[.02,.06]$\\
AN on$-$off, odd & $.26\,[.21,.31]$ & $.26\,[.20,.31]$ & $.12\,[.08,.16]$\\
odd$-$consec., no AN & $.32\,[.26,.38]$ & $.18\,[.14,.22]$ & $.04\,[.02,.06]$\\
odd$-$consec., AN & $.17\,[.12,.22]$ & $.22\,[.16,.27]$ & $.12\,[.08,.15]$\\
\bottomrule
\end{tabular}
\end{table}

As Table~\ref{tab:effects} shows, tangent AN raises Eve's measured error rate at intermediate reuse lengths, by up to $0.42$ at $\sigma_c=0.13$; for both families the $95\%$ interval of the difference excludes zero at $L=14$ and $28$ at all three noise levels. The observed delay is consistent with a lower signal-to-noise ratio of the empirical relation phases, since AN changes both the coefficients $K_c$ and the variances of the monomials. At $L=168$ Eve succeeds with or without AN. The odd family raises Eve's error rate by up to $0.34$ at $\sigma_c=0.13$, and the interval of the difference excludes zero in every cell in which either error rate lies between $0.01$ and $0.99$. Because the two families differ in relation order, relation multiplicity, coefficient magnitudes, moment dependence, and geometry at once, this difference may reflect their joint effect. Minimum relation order thus identifies which characters leak, while the rate at which they can be estimated and exploited depends on the full relation structure. Neither AN nor the odd family removes the one-codeword vulnerability.

The key-estimation step can be examined without decoding. Let $B_0$ and $B_1$ stack the relations of $\mathcal S_0$ and $\mathcal S_1$ as rows, $W_0=\mathrm{diag}(w_c:c\in\mathcal S_0)$, $\bm\omega$ the weights of $\mathcal S_1$ normalized to unit sum, and $G\in\R^{k\times(k-1)}$ the map that inserts $(\psi_2,\ldots,\psi_k)$ into $\bm\psi$ with $\psi_1=0$. Linearizing steps~4 to~6 of Algorithm~\ref{alg:attack} at the true key maps the relation-phase errors $\bm\varepsilon=(\bm\varepsilon_0,\bm\varepsilon_1)$ to the key error $\delta\bm\varphi=J\bm\varepsilon$ with
\begin{align}
J&=\bigl[\,P_0-M^{-1}h\,\bm\omega^{\mathsf T}B_1P_0\;\big|\;M^{-1}h\,\bm\omega^{\mathsf T}\,\bigr],\label{eq:jac}\\
P_0&=G\bigl((B_0G)^{\mathsf T}W_0^2B_0G\bigr)^{-1}(B_0G)^{\mathsf T}W_0^2.\nonumber
\end{align}
The per-symbol relation-phase vector $X_\ell$ has the entries $\mathrm{Im}(e^{-\mathrm i\langle c,\bm\varphi\rangle}\mathrm{sign}(K_c)U_{c,\ell})/|K_c|$, $c\in\mathcal S$, whose joint law does not depend on the key. We estimate $\Sigma=\mathrm{Cov}(X_\ell)$, cross-covariances included, from $2\cdot10^5$ symbols per configuration in a Monte Carlo run independent of the coded experiments. The Gaussian plug-in approximation $\delta\bm\varphi\sim\mathcal N(0,J\Sigma J^{\mathsf T}/L)$, scored by the RMS error minimized over the cyclic relabeling, gives the predicted median key error in Table~\ref{tab:keyerr}. It applies to the continuous estimate before canonicalization and candidate selection, in the local regime where all retained moments are nonzero and the residuals stay on one wrapped branch. Without AN, and for the odd family with AN, it agrees with the median of the same error measured in the coded runs within $12\%$ for $L\ge21$ at all three noise levels. For the consecutive family with AN, whose common phase rests on the single high-kurtosis fourth-order moment of $4e_{16}$, the local prediction overstates the error by up to $44\%$. AN raises both the predicted and the measured key error at every tested $L$, consistent with a loss of relation-phase SNR. The odd key is estimated more accurately than the consecutive key at every tested $L$ without AN and for $L\ge14$ with AN, and at a median key error near $0.11$\,rad, $\BLER_E$ is $0.21$ for the odd family at $L=28$ and $0.02$ for the consecutive family at $L=42$. At matched median key error, Eve's decoding is thus more sensitive in the odd family, consistent with its different constellation geometry.

\begin{table}[!t]
\centering
\caption{Median RMS key error (rad) of the continuous estimate, minimized over the cyclic relabeling, at $\sigma_c=0.13$: plug-in prediction from~\eqref{eq:jac} / coded runs.}
\label{tab:keyerr}
\footnotesize
\setlength{\tabcolsep}{4pt}
\begin{tabular}{@{}lccc@{}}
\toprule
Configuration & $L=28$ & $L=56$ & $L=168$\\
\midrule
consec. & $.139/.135$ & $.098/.098$ & $.057/.054$\\
consec.+AN & $.261/.204$ & $.214/.164$ & $.127/.116$\\
odd & $.105/.107$ & $.074/.074$ & $.043/.044$\\
odd+AN & $.159/.170$ & $.113/.116$ & $.065/.065$\\
\bottomrule
\end{tabular}
\end{table}

\subsection{Finite key protocol and key-refresh cost}

\begin{figure}[!t]
\centering
\includegraphics[width=.95\linewidth]{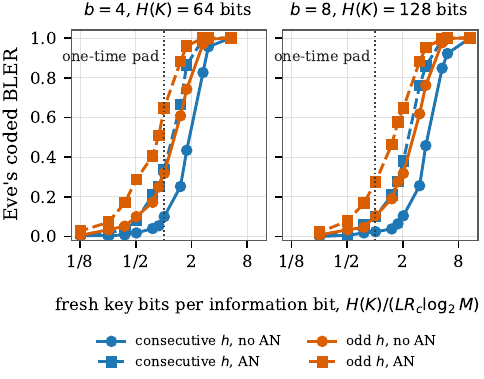}
\caption{Eve's coded $\BLER_E$ on the key grid $\mathcal K_b$ versus fresh key bits per information bit for the twelve tested $L$ at $\sigma_c=0.13$; $N=500$.}
\label{fig:cost}
\end{figure}

To tie the attack to a key entropy, we repeat the experiment with keys drawn from $\mathcal K_4$ and $\mathcal K_8$ at $\sigma_c=0.13$ and all twelve reuse lengths, where Eve projects her estimate onto the grid as described in Sec.~\ref{sec:attack}. The grid and continuous-key error rates differ by at most $0.064$ over the $96$ cells, and only $4$ of the $96$ Newcombe intervals of the difference exclude zero, each with the lower rate on the grid. The selected candidate equals the transmitted grid key, which is exact recovery modulo the cyclic relabeling because both are canonical, in less than $10\%$ of segments. A key refreshed every $L$ symbols consumes $H(K)/L$ fresh secret bits per symbol, or $\rho_K=H(K)/(LR_c\log_2M)$ fresh key bits per information bit (Fig.~\ref{fig:cost}).

To summarize the curves, we call a reuse length high-error for a configuration when the pointwise one-sided $95\%$ Clopper--Pearson lower bound of $\BLER_E$ is at least $\eta_E$, and Table~\ref{tab:cost} reports $\rho_K$ at the largest such tested $L$. We report the target $\eta_E=0.9$, chosen after the runs, and $\eta_E=0.5$ as a sensitivity check. On the $2^{128}$-key grid at $\eta_E=0.9$, the cheapest tested schedule that passes this rule uses $3.54$ fresh key bits per information bit, for the odd family with AN, and the consecutive family without AN needs $10.62$ because its lower bound at $L=7$ is $0.899$. Bonferroni-simultaneous bounds over the $96$ cells leave Table~\ref{tab:cost} unchanged. No entry is below $1.01$, so every tested schedule that meets either target on either grid consumes at least as much fresh key as a one-time pad on the information bits, which provides perfect secrecy at one key bit per information bit~\cite{shannon1949}.

\begin{table}[!t]
\centering
\caption{$\rho_K$ at the largest tested $L$ whose pointwise one-sided $95\%$ lower bound of $\BLER_E$ is at least the post-hoc target $\eta_E$ ($\sigma_c=0.13$, $N=500$).}
\label{tab:cost}
\footnotesize
\setlength{\tabcolsep}{4pt}
\begin{tabular}{@{}llcccc@{}}
\toprule
$\eta_E$ & $H(K)$ & consec. & consec.+AN & odd & odd+AN\\
\midrule
$0.9$ & $128$ & $10.62$ & $5.31$ & $5.31$ & $3.54$\\
 & $64$ & $3.04$ & $2.66$ & $2.66$ & $1.77$\\
\midrule
$0.5$ & $128$ & $5.31$ & $3.04$ & $3.04$ & $1.77$\\
 & $64$ & $2.66$ & $1.52$ & $1.52$ & $1.01$\\
\bottomrule
\end{tabular}
\end{table}

\section{Discussion}\label{sec:disc}

The resource implications depend on how the segment keys are supplied. Under independent random refresh on the $2^{128}$-key grid, the cheapest tested schedule that passes the post-hoc $0.9$ criterion uses $3.54$ times the key of a one-time pad on the information bits, and no tested schedule on either grid meets even the $0.5$ criterion with less key than the pad. A cryptographic generator that expands a short master key makes frequent refresh cheap, and confidentiality then rests on that generator, which could equally encrypt the bits. Physical-layer key generation from reciprocal channels~\cite{maurer1993,zhang2016keygen} has to be assessed by its net secret-key rate after probing, reconciliation, and privacy amplification. In all three cases the modulation geometry itself creates no fresh secret entropy and no computational security guarantee. Refresh acts on the relation leakage; the first-tone channel of Sec.~\ref{sec:model} depends on the labeling convention and closes with a full-circle first coordinate.

Stronger attacks can only help Eve: refining the key estimates themselves with the decoder output, known frame structure such as pilots, headers, and CRCs, or multiple antennas would lower her error rates further. Retransmission protocols also lengthen the reuse: Chase HARQ repeats the same keyed symbols in every round~\cite{han2026chase}, and since the relation monomials do not depend on the symbols, each round adds $n_s$ independent observations under the same key unless the key is refreshed per round. The operational transition is measured for one regular $(3,6)$ code, labeling, and decoder; the algebraic leakage is independent of these choices, whereas the reuse length at which it turns into message recovery can depend on them. Channel estimation errors and front-end impairments are left for future work.

\section{Conclusion}\label{sec:concl}

Reusing a phase key in harmonically coupled modulation converts short modular relations among the harmonic indices into estimable key characters. In the coded Fourier-curve case study, the evaluated relation-moment attack recovers the message reliably when one key spans a 168-symbol codeword and, for consecutive harmonics without AN, when it spans only $28$ symbols; shorter reuse raises its block error rate. Tangent AN and the tested harmonic family without relations below order four increase Eve's measured block error rate at several intermediate reuse lengths and do not remove the one-codeword vulnerability. For the tested product-grid protocol, independent refresh frequent enough to keep Eve's block error rate high consumes at least as much secret key as a one-time pad on the information bits under both tested targets. Nominal key-space size and the error rate of key-less receivers are therefore insufficient security evidence for keyed modulations with repeated waveform structure.

\end{document}